\newtheorem{theorem}{Theorem}
\newtheorem{corollary}{Corollary}
\journal{Information Fusion}
\begin{document}

\begin{frontmatter}



\title{Approximate Computational Approaches for Bayesian Sensor Placement in High Dimensions}


\author[cse]{Xiao Lin}
\ead{lin65@email.sc.edu}

\author[cse]{Asif Chowdhury}
\ead{asifc@email.sc.edu}

\author[visit]{Xiaofan Wang}
\ead{wangxfok@xaut.edu.cn }

\author[cse]{Gabriel Terejanu\corref{gabriel}}
\ead{terejanu@cec.sc.edu}

\address[cse]{Department of Computer Science and Engineering, 
University of South Carolina}

\address[visit]{Department of Computer Science and Engineering, 
Xi'an University of Technology}

\cortext[gabriel]{Corresponding author: Gabriel Terejanu. Address: 315 Main St, Swearingen Bldg. 3A01L, Columbia, SC 29208, USA.}

\begin{abstract}
Since the cost of installing and maintaining sensors is usually high, sensor locations should always be strategically selected. For those aiming at inferring certain quantities of interest (QoI), it is desirable to explore the dependency between sensor measurements and QoI. One of the most popular metric for the dependency is mutual information which naturally measures how much information about one variable can be obtained given the other. However, computing mutual information is always challenging, and the result is unreliable in high dimension. In this paper, we propose an approach to find an approximate lower bound of mutual information and compute it in a lower dimension. Then, sensors are placed  where highest mutual information (lower bound) is achieved and QoI is inferred via Bayes rule given sensor measurements. In addition, Bayesian optimization is introduced to provide a continuous mutual information surface over the domain and thus reduce the number of evaluations. A chemical release accident is simulated where multiple sensors are placed to locate the source of the release. The result shows that the proposed approach is both effective and efficient in inferring the QoI.
\end{abstract}

\begin{keyword}
Sensor placement \sep Bayesian inference \sep Mutual information \sep Canonical correlation analysis \sep Bayesian optimization


\end{keyword}

\end{frontmatter}


\section{Introduction}
\label{introduction}
Sensor placement plays an important role in a range of engineering problems, such as grid coverage~\citep{dhillon2003}, tracking target~\citep{cheng2013information,Hanbiao2004}, and monitoring spatial phenomena~\citep{MadankanSS14}. These problems may range across various subjects, however, many of them are similar in essence. In this paper, we restrict ourselves to the problem where quantities of interest (QoIs) need to be inferred given sensor measurements. One example in this category is monitoring atmosphere dispersion event. In a chemical release accident, it is desirable to know the release parameters such as location, strength and time, so as to aid emergency response. These release parameters are QoIs and can be inferred from sensor measurements. Other QoIs include concentration of chemicals which can be inferred in unobserved regions.

These sensor placement problems always involve mathematical models which describe the spatial and/or temporal process in concern. Usually,  QoIs are unknown model parameters and/or state variables.  The inference involves running computer simulations and solving inverse and forward problems. Generally, there are two categories of approaches, optimization methods and Bayesian inference. The optimization methods provide a single point estimate of unknown parameters by minimizing the discrepancy between model predictions and observational data. While Bayesian inference formulate the problem in a Bayesian framework where quantities have probabilistic description and posterior distribution of unknown parameters can be obtained via Bayes rule. 
The advantage of Bayesian inference is the incorporation of prior information and the access to the full posterior distribution from which estimates with quantified uncertainties can be extracted. Usually in Bayesian inference, Monte Carlo sampling are employed and inference problems are solved by Markov Chain Monte Carlo, particle filter or Ensemble Kalman filter for computation preference. For a thorough review of both optimization methods and Bayesian inference, one can refer to~\citep{hutchinson2017}. In this paper, we focus on Bayesian inference. 

Once we have measurement data, QoI can be inferred. However, data collected from different locations may provide different amount of information towards QoI. Moreover, it is usually impractical to place sensors exhaustively due to high cost of installing and maintaining sensors. Thus sensors should be placed judiciously so as to maximize the information content. Sensor locations are commonly decided by running simulation and maximizing certain target function. There are various choices towards the target function. It can be Mean Square Error (MSE), mutual information, entropy or measurements about inverse moment matrix in A-, D-, E-optimal design. Among all these criteria, mutual information is most common used, since it is a natural measure of dependence between the two variables and others can be derived from it either directly or indirectly. Ertin (2003)~\citep{Ertin2003} discussed maximum mutual information approach for dynamic sensor query problems. In his paper, mutual information between sensor data and target state was maximized at each step to decide which sensor was queried for tracking the target. And it was shown that this maximum mutual information approach was equivalent to minimizing expected posterior uncertainty in target state. This holds in the special case when the conditional distribution of the observable given the state is independent of the state as in the fixed additive Gaussian noise case. Andreas Krause (2008)~\citep{Krause2008} discussed sensor placements for prediction problems where mutual information between the observed locations and unobserved locations is maximized so as to gain most information about observed locations. Xiaopei Wu (2012)~\citep{Xiaopei2012} tackled a similar problem in soil moisture. Instead of applying maximum mutual information strategy globally, locations were first clustered according to soil moisture content, and then maximum mutual information was used in each cluster to select sensor locations.

In this paper, we apply the same idea that sensors should be placed where the mutual information between sensor measurements and QoI is maximized. The derivation of this metric is straight forward, however, computing mutual information is always challenging. Popular methods such as kernel density estimation and k nearest neighbor can give good estimate in low dimension, but work poorly in high dimensions due to the scarcity of samples.  This issue is not directly addressed in the past literature. In this paper, we propose an approximate sensor placement approach. Instead of maximizing mutual information directly, we derive a lower bound of mutual information, which can be computed in a much lower dimension. Furthermore, we adopt the same strategy as in \cite{Weaver2016} by using Bayesian optimization~\citep{Jones1998} to facilitate maximization over a continuous domain. This is much more efficient than the typical way which discretizes the domain into fine grid and make selection from all the grid points.  Finally, a chemical release accident is simulated and the proposed approach shows promising results.

The rest of the paper is organized as follows. In section 2, uncertainty modeling and Bayesian inference is introduced. The proposed sensor placement strategy is detailed in section 3. In section 4, a chemical release accident is simulated and the proposed approach is applied to infer release parameters. Conclusion is given in Section 5.

\section{Bayesian Inference}
\label{Bayesian Inference}

In this section, we introduce Bayesian inference for solving inverse problem. In the Bayesian framework, uncertainties in state variables and parameters are usually described with probability distributions. Measurement data is used to update the knowledge of these quantities. It is desirable for the posterior distribution to have a small uncertainty and at the same time to capture the true value. The connection between these quantities and observation data is embedded in the mathematical model which describes the process in concern. Since the proposed approach in this paper is universal, an abstract model will be introduced first.

\subsection{Uncertainty Modeling}

To set notation, consider the following abstract model:
\begin{align}
&\mathcal{R}(u, \theta, x) = 0 \label{eq:model} \\
&y = \mathcal{Y}(u, \theta, x) \label{eq:model_obs} \\
&q = \mathcal{Q}(u,\theta, x)~. \label{eq:QoI}
\end{align}
Here, $\mathcal{R}$ is some operator, $u$ is the solution or the state variable and $\theta$ is a set of parameters, which usually have a physical interpretation. $x$ denotes the scenario which defines the problem being considered. In sensor placement, $x$ usually refers to sensor locations. $\mathcal{Y}$ is a map from the solution to the prediction quantity $y$ that can be compared with sensor measurements $D$.  In addition, $\mathcal{Q}$ defines QoI which is denoted by $q$.  In our problem, $q$ can be $\theta$ or $u$ or other quantities inferred from $\theta$ and $u$. However, no matter what $q$ is, model parameters and state variables need to be known first. Then other quantities can be obtained through Eq.~\eqref{eq:QoI}. Usually, $\theta$ and $u$ are unknown or partially unknown, and need to be inferred from data. Let $\tau$ denotes unknown parts of $\theta$ and $u$. In this paper, Bayesian inference is carried out to solve the problem. 

Because the observation data is noisy due to sensor imprecision, the measurement noise $\epsilon$ usually follows a known pdf $p(\epsilon)$ that is defined by the specifications of the sensors. This results in the following relation between the observable $d$ and model prediction $y$.
\begin{equation}
d = y + \epsilon ~.\label{eq:meas_noise}
\end{equation}

Finally, the relation between the observable $d$ and model parameters $\theta$ as well as state variable $u$ is given by combining Eq.~\eqref{eq:model_obs} and Eq.~\eqref{eq:meas_noise}. This measurement model, Eq.~\eqref{eq:meas_model} defines the likelihood function and Bayes rule can be used to update the knowledge of $\tau$.
\begin{equation}
d = \mathcal{Y}(u, \theta, x) + \epsilon~.\label{eq:meas_model}
\end{equation}

Since Bayes rule is used as the inference engine, then a prior probability distribution needs to be defined for $\tau$, that is, $\tau \sim p(\tau)$. Note that the additive errors introduced in the previous equations are not a requirement; multiplicative errors or embedded errors are possible as well. 

\subsection{Bayesian inference}
This paper employs probability to represent uncertainty and Bayesian inference to update the uncertainty of unknown quantity $\tau$ in light of observation data. In Bayesian inference, one seeks a complete probabilistic description of $\tau$ that make the model consistent with the observation data, $D$. The solution to this problem is the posterior probability density function of $\tau$ and it is defined by Bayes’ Theorem,
\begin{equation}
p(\tau|D) = \frac{p(D|\tau)p(\tau)}{p(D)}~. 
\label{eq:bayes_1}
\end{equation}

Here, $p(D|\tau)$ is the likelihood function and it measures the agreement between the model output and the data for given values of the input $\tau$. The denominator in Eq.~\eqref{eq:bayes_1} is called the marginal likelihood or evidence. Overall, this is just a normalization constant that ensures that the solution to the Bayes' inverse problem, $p(\tau|D)$ is indeed a proper pdf that integrates to one.

In a more general scenario, data is collected over a period of time, and the knowledge of $\tau$ is updated after each measurement. Let $D= \left\{d_{1}, d_{2},..., d_{M}\right\}$ denote the measurement data collected at M time points. The final posterior distribution $p(\tau|D_{M})$ can be obtained recursively as shown in Eq.~\ref{eq:recursiveupdate}
\begin{align}
p(\tau|D_{M})&= \frac{p(D_{M}|\tau)p(\tau)}{p(D_{M})} \nonumber\\
&= \frac{p(d_{M},D_{M-1}|\tau)p(\tau)}{p(d_{M},D_{M-1})} \nonumber\\
&= \frac{p(d_{M}|D_{M-1},\tau)p(D_{M-1}|\tau)p(\tau)}{p(d_{M}|D_{M-1})P(D_{M-1})} \nonumber\\
&= \frac{p(d_{M}|D_{M-1},\tau)p(\tau|D_{M-1})}{p(d_{M}|D_{M-1})}~.
\label{eq:recursiveupdate}
\end{align}

As we can see, in the inference above, the posterior distribution obtained at current time will be the prior for the next time point. Note, the true value of model parameters remains unchanged, while other quantities like states usually change with time.  Another thing worth mentioning is that during the observation time period, sensors can either remain fixed or be mobile. This raises two categories of sensor placement problems, static sensors and mobile sensors. If mobile sensors are used, sensor locations need to be selected at each time point given the current status of the system. In this case, the above observation data $d_{i}$ are collected from different locations. 

A vast number of computational approaches have been invented to solve Eq.~\ref{eq:bayes_1}. For linear models with Gaussian distribution and additive white noise, Kalman filter is the most accurate and efficient method to solve the inverse problem. However, this is not the case on most occasions, where models can be nonlinear and the distribution of states and/or parameters are non Gaussian.  Then we need to use numerical sampling techniques, also known as Monte Carlo. Most commonly used methods include particle filter and Markov Chain Monte Carlo (MCMC)~\citep{khaleghi2013multisensor}. In particle filter, samples' weights are updated by calculating likelihood at each sample. For better estimation, more advanced procedures might be added, such as regularization and progressive correction~\citep{oudjane:2000}. On the other hand, MCMC has become a main computational workhorse in scientific computing from a large class of distributions. The most basic form of MCMC is Metropolis-Hasting (MH) algorithm~\citep{Mira:2001,Beck:2002}, which generates a sequence of correlated samples that form a Markov chain. Improved versions of MCMC such as delayed rejection adaptive metropolis  (DRAM) and transitional MCMC (TMCMC)~\citep{Haario:2006,Beck:2002,Ching:2007} are also used. Another important method is ensemble Kalman filter (EnKF). EnKF was first introduced by Evensen \cite{Evensen2009} and has been widely used in various applications due to its simplicity in both theory and implementation. It originates from Kalman filter but uses Monte Carlo approach to represent probability distributions. In EnKF, samples are called ensemble members. Each ensemble member is updated through similar formula as in Kalman filter. EnKF propagates a relatively small ensemble of samples through the system non-linearities and moves them such that their mean and covariance approximate the first two moments of the posterior distribution.  

As we can see, once observation data are collected, various methods can be performed to implement Bayesian inference. However, sensors placed at different locations might provide different amount of information about $\tau$. More specifically, the posterior distribution $p(\tau|D)$ in Eq.~\ref{eq:bayes_1} is more likely to be different given different $D$. Some might have a larger uncertainty, while others have more accurate and confident estimation. Thus placing sensors judiciously is important and this will be detailed in the following section.

\section{Sensor Placement}
\label{SensorPlacement}

The placing of sensors to infer the QoIs is formulated in Bayesian framework where each QoI has a probabilistic description. Here, we use $q$ to denote all the QoI which might include model parameters, state variables or other quantities. Suppose we have prior information $p(\tau)$, when observation data $d$ is collected, posterior $p(\tau|d)$ can be estimated  via Bayes' rule which has been discussed in the previous section. Usually, the uncertainty of $\tau$ will be reduced after Bayesian inference. Meanwhile, any uncertainty of $\tau$ will be propagated to $q$ through Eq.~\eqref{eq:QoI} and it is desirable for $q$ to have small uncertainty. However, data collected at different locations will lead to different posterior distribution $p(q)$ which sometimes can be too wide to provide useful information. Thus sensors should be placed strategically. Here, we develop a approach which selects the most informative sensor locations. The approach is based on mutual information criterion which will be discussed first.

\subsection{Mutual Information Criterion}

Mutual information criterion was first introduced by Lindley~\citep{Lindley1956} who used this criterion to measure the expected amount of information provided about the unknown parameters $\theta$ by the measurement data $d$ in an experiment. In this criterion, the information obtained from an experiment is quantified by the reduction in uncertainty of $\theta$ which is represented by Shannon entropy. Thus different experimental conditions can be compared and the one that leads to the most reduction in uncertainty of $\theta$ will be selected to perform the experiment. 

Sensor placement shares the same idea. We use $x$ to denote sensor locations, the amount of information provided by observation data $d$ at  $x$ will be  
\begin{align} \label{utilFunc}
U(d,x) =&- \int_{Q} p(q) \log p(q)
    \mathrm{d}q - (-\int_{Q} p(q|d)
    \log p(q|d) \mathrm{d}q)
\end{align}
where the first term on the right is the entropy of prior distribution $p(q)$ and the second term is the entropy of posterior distribution $p(q|d)$. Since observation data can only be obtained after sensors are placed, the expected amount of information is  calculated by marginalizing over all possible observations:
\begin{equation} \label{expUtil}
\mathrm{E}_{d}[U(d,x)] = \int_{\mathcal{D}} U(d,x)
    p(d,x) \mathrm{d} d~.
\end{equation}
Eq.~\eqref{expUtil} can be expanded as follows:
\begin{align}
\mathrm{E}_{d}[U(d,x)] =& \int_{\mathcal{D}}
    \int_{Q} p(q,d|x)
    \log \frac{p(q,d|x)}
    {p(d|x)}
    \mathrm{d}q \mathrm{d}d \nonumber \\
 & - \int_{\mathcal{D}} \int_{Q} p(q,d|x)
    \log p(q) \mathrm{d}q \mathrm{d}d \\
=&\int_{\mathcal{D}} \int_{Q} p(q,d|x)
    \log \frac{p(q,d|x)}
    {p(q)p(d|x)}
    \mathrm{d}q\mathrm{d}d \nonumber \\
=& I(q;d|x)~.
\end{align}
We can see that the expected amount of information of $q$ provided by sensors equals mutual information between $q$ and sensor readings $d$.  Therefore sensors should be placed where this mutual information is maximized. This is shown in Eq.\eqref{MI criterion} 
\begin{align}\label{MI criterion}
x^* = \arg\max_{x}
    I(d;q|x)
\end{align}
Although mutual information is a perfect criterion theoretically, estimating mutual information is challenging. Some commonly used estimators include histogram based estimator, kernel density estimator, and $k$-nearest neighbor estimator (kNN). In their survey, Walters-Williams and Li~\citep{WaltersWilliams:2009gh} show that parametric estimation usually outperform non-parametric estimation when data is drawn from a known family of distributions. But this is not the case in most practical problems. Usually mutual information is estimated directly from Monte Carlo samples. Khan et al.~\citep{Khan:2007up} compare different estimators and show that the kNN estimator of mutual information captures better the nonlinear dependence than other commonly used estimators. In our paper, we adopt the kNN estimator of mutual information proposed by Kraskov et al.~\citep{Kraskov:2004gr}, which is based on kNN estimator of entropy proposed by Kozachenko and Leonenko~\citep{Kozachenko:1987ts}.  The kNN estimator is as follows:
\begin{align}
I(q;d|x) \approx & -\frac{1}{N}\sum^{N}_{i=1}\left(\psi(n_{q}(i) + 1) + \psi(n_{d}(i) + 1)\right) + \psi(k) + \psi(N)~.
\end{align}
Here, $n_{d}(i)$ and $n_{q}(i)$ are the number of samples in the marginal space within the distance from the $i$th sample to its kNN in the joint space, and $\psi(k)$ is the digamma function which satisfy the recursion $\psi(k+1) = \psi(k) + 1/x$ and $\psi(1) = -C$, where $C \approx 0.5772156$ is the Euler-Masheroni constant. Note that a small value for $k$ will result in a small bias but a large variance and vice-versa. Also, the efficiency of the estimator decreases as the dimensionality of the joint space increases.  

\subsection{Sensor Placement Approach}

The main focus of this paper is to solve the problem of computing mutual information in high dimensions. Due to scarcity of samples used to capture the joint distribution $p(d,q)$ in high dimensions the numerical approximation of the mutual information $I(d;q)$ is not reliable. To solve this problem, in this section we develop an novel approach which computes the lower bound of $I(d;q)$ in much lower dimension. Also, in order to lower the computational cost, Bayesian optimization~\citep{Jones1998,Weaver2016} is introduced to generate a mutual information surface which greatly reduces the number of evaluations.

\subsubsection{Data Processing Inequality}

The derivation of lower bound is based on data processing inequality which is stated below~\citep{Cover:2006}.
\begin{theorem}\label{Data-processing inequality}
If random variables $J$, $V$, $Z$ form a Markov chain in the order denoted by $J \rightarrow V \rightarrow Z$, then $I(J; V) \geq I(J; Z)$.
\end{theorem}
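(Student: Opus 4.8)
The plan is to establish the inequality via the chain rule for mutual information, which lets us decompose $I(J; V, Z)$ two different ways and then exploit the Markov property to kill one of the conditional terms. First I would write down the chain rule expansion in both orderings:
\begin{align}
I(J; V, Z) &= I(J; V) + I(J; Z \mid V) \nonumber \\
&= I(J; Z) + I(J; V \mid Z)~. \nonumber
\end{align}
These two identities follow directly from the definition of (conditional) mutual information in terms of entropies, $I(A;B) = H(A) - H(A\mid B)$, applied recursively.

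Next I would invoke the Markov assumption $J \rightarrow V \rightarrow Z$. By definition this means $J$ and $Z$ are conditionally independent given $V$, i.e. $p(j, z \mid v) = p(j \mid v)\, p(z \mid v)$. The key consequence is that $I(J; Z \mid V) = 0$, since conditional mutual information is zero exactly when the two variables are conditionally independent. Substituting this into the first expansion gives $I(J; V, Z) = I(J; V)$. Combining with the second expansion yields
\begin{align}
I(J; V) = I(J; Z) + I(J; V \mid Z)~. \nonumber
\end{align}

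Finally I would use the non-negativity of conditional mutual information, $I(J; V \mid Z) \geq 0$, which is itself a consequence of Jensen's inequality applied to the convexity of the Kullback--Leibler divergence (or equivalently, the fact that conditioning cannot increase entropy). This immediately gives $I(J; V) \geq I(J; Z)$, as desired.

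The only genuinely delicate point is justifying $I(J; Z \mid V) = 0$ from the Markov chain hypothesis and, more subtly, the nonnegativity of conditional mutual information in full generality (continuous random variables, where the quantities are defined via densities and the relevant integrals must be shown to converge and the Jensen step to apply). In the discrete case both facts are elementary; in the continuous case one typically appeals to the measure-theoretic definition of mutual information as a supremum over finite partitions, where both properties transfer cleanly. Since the paper works with densities throughout and treats these information-theoretic quantities at a working level of rigor, I would simply cite the standard reference already invoked, \citep{Cover:2006}, for the nonnegativity and chain-rule facts, and present the three-line argument above.
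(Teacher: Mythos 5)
Your proof is correct. Note that the paper does not actually prove this theorem itself---it states it and cites \citep{Cover:2006}---and your argument (chain-rule decomposition of $I(J;V,Z)$ in two orders, $I(J;Z\mid V)=0$ from the Markov property, and nonnegativity of $I(J;V\mid Z)$) is precisely the standard proof given in that reference, so it matches the intended justification; the paper's only original proof work is for the corollary, which builds on this theorem.
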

$J$, $V$, $Z$ form a Markov chain means the conditional distribution of $Z$ depends only on $V$ and is conditionally independent of $J$, that is $p(Z|V) = p(Z|J, V)$. A special case of Theorem~\ref{Data-processing inequality} is that if $Z = g(V)$, then $I(J; V) \geq I(J; Z)$ which means no function $g(\bullet)$ can increase the amount of information that $V$ tells about $J$. We extend this idea by applying transformations on both $J$ and $V$, which leads to Corollary~\ref{corollary of DPI}.
\begin{corollary}\label{corollary of DPI}
For random variables $J$, $V$, $Z$ and $U$,  if $J \rightarrow V \rightarrow Z$ and $U = h(J)$, then $I(J; V) \geq I(U; Z)$ .
\end{corollary}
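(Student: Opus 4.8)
The plan is to reduce the corollary to the data processing inequality (Theorem~\ref{Data-processing inequality}) by constructing an augmented Markov chain that includes the transformed variable $U = h(J)$. The key observation is that $U$ is a deterministic function of $J$, so no new randomness is introduced; therefore whatever conditional independence structure holds for $J$ must descend to $U$. Concretely, I would argue that $U \rightarrow J \rightarrow V \rightarrow Z$ forms a Markov chain, and then read off the conclusion by applying the data processing inequality to suitable contractions of this chain.

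First I would verify the Markov chain $U \rightarrow J \rightarrow V \rightarrow Z$. The link $J \rightarrow V \rightarrow Z$ is given by hypothesis, so $p(Z \mid V) = p(Z \mid J,V)$. For the new link, since $U = h(J)$ is a deterministic function of $J$, we have $p(V \mid J) = p(V \mid J, U)$ trivially (conditioning on $U$ adds nothing once $J$ is known). Combining these, the joint factorizes as $p(U,J,V,Z) = p(U,J)\,p(V\mid J)\,p(Z\mid V)$, which is exactly the statement that $U,J,V,Z$ is a Markov chain in that order. Next, I would apply Theorem~\ref{Data-processing inequality} twice. Treating the chain as $(U) \rightarrow J \rightarrow (V,Z)$ — more precisely, grouping so that the middle variable is $J$ and passing to $V$ — gives one inequality; but the cleaner route is: from $U \rightarrow J \rightarrow V$ we get $I(U;J) \geq I(U;V)$ is not quite what we want either. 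The sharpest application: the chain $U \rightarrow J \rightarrow V \rightarrow Z$ immediately yields, by the data processing inequality applied to the sub-chain $U \rightarrow (J \to V) \rightarrow Z$ viewed as $U \rightarrow V \rightarrow Z$ after noting $U \rightarrow V$ is itself Markov, that $I(U; V) \geq I(U; Z)$. Then separately, since $U = h(J)$, the special case $Z = g(V)$ form of Theorem~\ref{Data-processing inequality} — applied with roles swapped, i.e. $U \rightarrow J$ with $V$ playing the role of the information source — gives $I(J; V) \geq I(U; V)$, because $U \rightarrow J \rightarrow V$ means $I(V;J) \geq I(V;U)$. Chaining the two inequalities $I(J;V) \geq I(U;V) \geq I(U;Z)$ delivers the result.

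The main obstacle, and the step that needs the most care, is establishing that the two separate applications of the data processing inequality are legitimate — specifically, confirming that the relevant three-variable sub-collections genuinely form Markov chains in the required order. For $I(J;V) \geq I(U;V)$ I need $U \rightarrow J \rightarrow V$, i.e. $V$ is conditionally independent of $U$ given $J$; this holds because $U = h(J)$ carries no information beyond $J$. For $I(U;V) \geq I(U;Z)$ I need $U \rightarrow V \rightarrow Z$, i.e. $Z$ is conditionally independent of $U$ given $V$; this follows from the original $J \rightarrow V \rightarrow Z$ together with $U$ being a function of $J$ — formally $p(Z\mid V, U) = \sum_{j} p(Z\mid V, J=j) p(J=j \mid V, U) = p(Z \mid V)$ since $p(Z\mid V,j) = p(Z\mid V)$ for every $j$. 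Once these two conditional independence facts are written out explicitly, the rest is just transitivity of the inequality, so I would present the proof as: (i) state the two Markov sub-chains with a one-line justification each, (ii) invoke Theorem~\ref{Data-processing inequality} on each, (iii) combine.
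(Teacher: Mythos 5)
Your proof is correct and rests on essentially the same idea as the paper's: two applications of the data processing inequality (Theorem~\ref{Data-processing inequality}) chained by transitivity. The only difference is the order of the two contractions — you pass through the intermediate quantity $I(U;V)$ via the sub-chains $V \rightarrow J \rightarrow U$ and $U \rightarrow V \rightarrow Z$ (the latter needing the short marginalization check you give), whereas the paper passes through $I(J;Z)$ via $J \rightarrow V \rightarrow Z$ and then the automatic chain $Z \rightarrow J \rightarrow h(J)$; both routes are equally valid.
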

\begin{proof}
$J \rightarrow V \rightarrow Z$ implies $Z\rightarrow V \rightarrow J$ which further implies $Z\rightarrow J \rightarrow h(J)$. Thus $I(J; V) \geq I(J; Z) \geq I(U; Z)$. In particular, if $Z = g(V)$, then $I(J; V) \geq I(h(J), g(V))$.
\end{proof}
In our case, for any transformation $g(\bullet)$ on $d$, and  $h(\bullet)$ on $q$ we have $h(q) \leftrightarrow q \leftrightarrow d \leftrightarrow g(d)$ which leads to 
\begin{equation}\label{inequality}
I(q; d) \geq I(q; g(d)) \geq I(h(q); g(d))~.  \\
\end{equation}
Thus, the problem is converted to looking for two proper transformations $g(\bullet)$ on $d$ and $h(\bullet)$ on $q$. Since we want mutual information to be computed in lower dimension,  the most favorable transformations are the ones that convert $q$ and  $d$ to two one dimensional variables, at the same time keep the dependency between $q$ and $d$. One such transformation is canonical correlation analysis (CCA) which will be discussed in the following section.

\subsubsection{Canonical Correlation Analysis}

Given two random vectors $J$ and $V$, canonical correlation analysis is concerned with finding two directions on which the projections of $J$ and $V$ have maximum correlation.  Assuming $J$ has dimension $p_{1}$ and $V$ has dimension $p_{2}$, there are $p$ canonical correlations of $J$ and $V$, where $p = \min\lbrace p_{1}, p_{2} \rbrace$.  These $p$ canonical correlations can be obtained by solving a generalized eigenvalue problem~\citep{Bach:2003}. The solution will provide all $p$ correlations which are ranked in decreasing order. Here, we only consider the first canonical correlation, that is the largest one among all the correlations, and use it to derive the lower bound of mutual information. 

Let $\alpha$ and $\beta$ denote two projecting directions, the first canonical correlation $\rho_{1}$ is defined as the maximum possible correlation between two projections $\alpha^{T}J$ and $\beta^{T}V$: 
\begin{equation}
\rho_{1} = \max_{\alpha, \beta} corr(\alpha^{T}J, \beta^{T}V)~.
\end{equation}
Here $\alpha^{T}J$ and $\beta^{T}V$ are both one dimensional random variables.  Recall that our goal is to find two transformations $h(\bullet)$ and $g(\bullet)$ such that $I(h(J); g(V))$ can be computed in lower dimension. Thus we can choose $h(J) = \alpha^{T}J$ and $g(V) = \beta^{T}V$ where $\alpha$ and $\beta$ are projecting directions of the first canonical correlation. 

~\citep{Bach:2003} discussed relation between CCA and mutual information. It is shown that if the joint distribution of $J$ and $V$ is Gaussian, the mutual information between $J$ and $V$ can be expressed as a function of canonical correlations $\rho_{i}$, $i = 1, 2, ... , p$.
\begin{equation}
I(J, V) = -\dfrac{1}{2}\sum_{i=1}^{p}log(1 - \rho_{i}^{2})~.
\end{equation}
We can further denote $I_{i}(J, V)$ as the mutual information between the $i$th canonical projections. Then 
\begin{equation}
I(J, V) = \sum_{i=1}^{p}I_{i}(J, V)
\end{equation}
which means if $p = 1$, the mutual information of Gaussian variable $J$ and $V$ is equivalent to the mutual information of their first canonical projections, that is
\begin{equation}
I(J, V) = I_{1}(J, V) = I(\alpha^{T}J; \beta^{T}V )~.
\end{equation}
In our case,  $h(\bullet)$ and $g(\bullet)$ in Eq.~\eqref{inequality} are selected to be the first canonical projections. In particular, If  $q$ contains only one quantity, then $h(q) = q$. Further more, if the joint distribution of $q$ and $d$ is Gaussian, then the equality will hold. As we can see, the mutual information between the first canonical projections is a lower bound of the original mutual information criterion. And we will use this lower bound as the target function in the our sensor placement.

\subsubsection{Bayesian Optimization}

By projecting the observable and QoI onto a lower dimensional space, the computation of  mutual information becomes more reliable. However, to select the location with maximum mutual information over a continuous domain is still challenging, since mutual information is usually computed through Monte Carlo method and doesn't have a close form. A common way to solve this problem is to discretize the domain into grid then compare mutual information at each grid point. The drawback of this method is obvious. On one hand, if the grid is too fine, it will require vast computations which is not efficient. On the other hand, sparse grid may fail to capture the optimal point. In this paper, we apply Bayesian optimization~\citep{Jones1998} to facilitate the selection of sensor locations.

The basic idea of Bayesian optimization is to evaluate objective function $f(x)$ (which is mutual information in our case) at a small number of points where $f(x)$ is most likely to reach the maximum. Two concepts are need to address, one is probabilistic prior on objective function $f(x)$, the other is acquisition function $a(x)$. 

\begin{equation}
f(x) = I(q;d|x)
\end{equation}

\paragraph{Gaussian process}. Like prior on variables, we use $p(f(x))$ to denote the prior distribution of $f(x)$. After calculating the estimates of the mutual information at various scenarios, $D_{t} = \lbrace(x_{i}, f_{i})| i = 1,2, ... , t\rbrace$, the posterior distribution of $f(x)$ can be obtained through Bayes' rule
\begin{equation}
p(f(x)| D_{t}) = \dfrac{p(D_{t}| f(x))p(f(x))}{p(D_{t})}~.
\end{equation}
The most common prior is Gaussian process which is denoted as 
\begin{equation}
f(x) \sim \mathcal{GP}(m(x), k(x, x^{'}))~.  \nonumber
\end{equation}
Here, $m(x)$ is the mean function and $k(x, x^{'})$ is the covariance function. A Gaussian prior on $f(x)$ means that $f(x)$ at different points $x$ follows a multivariate Gaussian distribution whose mean is $m(x)$ and covariance matrix is specified by $k(x, x^{'})$. 
\begin{equation}
\begin{bmatrix}
f(x_{1}) \\ \vdots \\ f(x_{n})
\end{bmatrix}
\sim 
\mathcal{N}\left(
\begin{bmatrix}
m(x_{1}) \\ \vdots \\ m(x_{n})
\end{bmatrix},
\begin{bmatrix}
k(x_{1}, x_{1}) & \cdots & k(x_{1}, x_{n}) \\ \vdots & \ddots & \vdots \\k(x_{n}, x_{1}) & \cdots & k(x_{n}, x_{n})
\end{bmatrix}
\right)  \nonumber
\end{equation}
Covariance function $k(x, x^{'})$ defines the correlation between two different points $x$ and $x^{'}$. A commonly used covariance function is squared exponential function,
\begin{equation}
k(x, x^{'}) = exp(-(x - x^{'})^{T}\Lambda^{-1}(x - x^{'}))
\end{equation}
where $\Lambda$ is a diagonal matrix of which each entry $\lambda$ on the diagonal specifies the correlation length of two points in the corresponding dimension. A large $\lambda$ means $f(x)$ is smooth in that dimension. Usually, $\lambda$ is obtained through maximum likelihood estimation.

Now let $F_{t}$ denote observed outputs over $X_{t} = [ x_{1},$ $ x_{2}, ... , x_{t} ]$, $F_{n}$ denote outputs over any $X_{n} = [ x_{1}^{'}, x_{2}^{'}, ...,$  $x_{n}^{'} ]$, according to the Gaussian prior, we have
\begin{equation}
\begin{bmatrix}
F_{t} \\ F_{n}
\end{bmatrix}
\sim 
\mathcal{N}\left(
0,
\begin{bmatrix}
K(X_{t}, X_{t}) & K(X_{t}, X_{n}) \\K(X_{n}^{'}, X_{t}) & K(X_{n}, X_{n})
\end{bmatrix}
\right)~.  \nonumber
\end{equation}
$K(X_{t}, X_{n})$ is a covariance matrix of which the element at $(i, j)$ is specified by $k(x_{i}, x_{j}^{'})$.  From this joint Gaussian distribution, we can obtain the predictive distribution of $F_{n}$~\citep{Rasmussen06}
\begin{equation}
p(F_{n}| X_{n},X_{t},F_{t}) \sim \mathcal{N}(\mu(X_{n}), \Sigma(X_{n}))
\end{equation}
where
\begin{align}
\mu(X_{n}) = & m(X_{n}) + K(X_{n}, X_{t})K(X_{t}, X_{t})^{-1}(F_{t} - m(X_{t})) \label{eq:BOmu}\\
\Sigma(X_{n}) = & K(X_{n}, X_{n})- K(X_{n}, X_{t})K(X_{t}, X_{t})^{-1}K(X_{t}, X_{n})~. \label{eq:BOsigma}
\end{align}
In this way, we can get estimation of mean $\mu(f(x))$ and variance $\sigma(f(x))$ for each $x$. Thus a surface of $f(x)$ with uncertainty is generated. Each time after  evaluating $f(x)$ at some $x$, the surface will be updated according to Eq.~\eqref{eq:BOmu} and Eq.~\eqref{eq:BOsigma}.

\paragraph{Acquisition function}. In the last section, we have discussed Gaussian process prior over objective function $f(x)$ and how to update distribution of $f(x)$ given evaluation at $x$. But where to collect real data $D$ remains unsolved, since we want to place the sensor where the maximum $f(x)$ is achieved.  Here, acquisition function $a(x)$ will guide the searching process. 

Acquisition function takes the mean $\mu(f(x))$ and the variance $\sigma(f(x))$ as two arguments, and has a property that large value of  $a(x)$ is associated with  potentially large value of $f(x)$. Thus we only need to evaluate $f(x)$ at the point where $a(x)$ reaches maximum.  And $a(x)$ is usually much easier to evaluate. 

One of the most popular acquisition function is expected improvement~\citep{Brochu:2009}. Let $f(x^{+})$ denote the current maximum value of objective function. The improvement of the new evaluation will be
\begin{equation}
\delta(x) = \max \lbrace 0, f(x_{t+1}) - f(x^{+}) \rbrace~.
\end{equation}
Since $f(x_{t+1})$ follows normal distribution with mean 
$\mu(f(x))$ and variance $\sigma(f(x))$ which can be obtained from Gaussian process, the expected improvement will be
\begin{align}
EI(x) &= \int_{\delta=0}^{\delta=\infty} \delta \dfrac{1}{\sqrt{2\pi\sigma(x)}}exp\left(-\dfrac{(\mu(x) - (\delta + f(x^{+})))^{2}}{2\sigma^{2}(x)} \right)\mathrm{d}\delta \nonumber\\
&= \sigma(x)\left[ \dfrac{\mu(x) - f(x^{+})}{\sigma(x)}\Phi\left( \dfrac{\mu(x) - f(x^{+})}{\sigma(x)} \right) + \phi \left(\dfrac{\mu(x) - f(x^{+})}{\sigma(x)} \right)\right]~.
\end{align}
Maximizing $EI(x)$ will give the next sampling point/scenario to evaluate $f(x)$. The acquisition function achieves a trade-off between large values of $f(x)$ and large uncertainty in $f(x)$.

\subsubsection{Placement of Multiple Sensors}

Bayesian optimization is powerful in looking for one single sensor location. However, in most cases, it is necessary to place multiple sensors so as to get enough information about QoI. Selecting multiple sensor locations is more complicated. When using sparse grids, the number of combinations of sensor locations where mutual information needs to be evaluated becomes prohibited computationally. Here, we adopt a greedy approach that leverages the previously introduced lower bound. Suppose $N$ sensors need to be placed. When we place the first sensor,  the lower bound of $I(d_{1}(x); q)$ is maximized through Bayesian optimization and the point $x$ with maximum mutual information is selected as the first sensor's location. For the second sensor,  the lower bound of $I(d_{1}, d_{2}(x); q)$ is maximized. This time $d_{1}$ is associated with the first sensor location which is fixed.  The other sensors' locations are selected in the same way, maximizing the lower bound of $I(d_{1}, d_{2}, ... ,d_{i}(x); q)$, until $i = N$.  Here, we use $I(h_{i}(q), g_{i}(d_{i}^{*}(x)))$ to denote the lower bound and $d_{i}^{*}(x) = [d_{1},d_{2}...,d_{i}(x)]$. The whole process of sensor placement is shown in Algorithm.~\ref{BOalgorithm}.
\begin{algorithm}[H]
\caption{Sensor placement with Bayesian optimization}
\label{BOalgorithm}
\begin{algorithmic}[1]
\For{$i = 1, 2, ... , N$ }
\State Evaluate $I(h_{i}(q), g_{i}(d_{i}^{*}(x)))$  at initial points $X_{init} = \lbrace x_{k}| k = 1,2, ... , K \rbrace$.
\State Update Gaussian process on $I(h_{i}(q), g_{i}(d_{i}^{*}(x)))$ with $E_{init} = \lbrace (I_{k}, x_{k})| k = 1,2, ... , K \rbrace$.
\For{$j = 1, 2, ... , M$}
\State Select next point $x_{k+1}$ by maximizing acquisition function $a(x)$ associated with Gaussian process on $I(h_{i}(q), g_{i}(d_{i}^{*}(x)))$
\State Evaluate $I(h_{i}(q), g_{i}(d_{i}^{*}(x)))$ at $x_{k+1}$.
\State Update Gaussian process on  $I(h_{i}(q), g_{i}(d_{i}^{*}(x)))$ with $(I_{k+1}, x_{k+1})$.
\EndFor
\State Select the point $x$ with maximum value among $M + K$ evaluations as the $i$th sensor's location.
\EndFor
\end{algorithmic}
\end{algorithm}
Note that sensor locations are selected greedily by running simulations without any real data. Once observation data are collected, sensors can either stay still or move around. If mobile sensors are used, Algorithm.~\ref{BOalgorithm} need to be conducted after each Bayesian update.

\section{Simulation}
\label{Simulation}

In this section, a chemical release accident is simulated. The accident occurs in a pipeline and the plume advects and diffuses over the affected area. Sensors are placed to locate the release source. In this simulation, sensor locations are fixed during the experimental period and observation data are collected with a fixed time interval. At the same time, Bayesian inference is performed, where the estimation of release location is updated at each time point. The performance of the proposed method is assessed by looking at the uncertainty of the posterior distribution, which is quantified by entropy. We also compare sensor locations selected by Bayesian optimization and those selected on a prefixed grid.

\subsection{Dispersion Model}

In this simulation, we adopt a 2D Gaussian puff based model from~\citep{Reddy2007, TerejanuP_IF_2007} where the model is used for studying data assimilation in atmospheric dispersion. The chemical material sequentially released at the source is represented by a series of circular puffs. The advection and diffusion of the plume is decided by meteorological conditions. The concentration of each puff has a Gaussian-shape distribution and the concentration at each spacial point is the summation of contributions from all the puffs. For simplicity, deposition and puff splitting is ignored. 

Each puff is characterized by three state variables: the center of the puff $\left( X, Y\right) $, the radius $r$ and the mass $Q$. The advection is decided by the wind at the puff center and the radius of the puff is computed based on Pasuill parameterization~\citep{Reddy2007, TerejanuP_IF_2007}. The dynamics of the $k$th puff at time $t_i$ is as follows:
\begin{align}
  X_k(t_{i})  &=X_k(t_{i-1}) + W_{spd}\cos (W_{dir})\Delta T \\
  Y_k(t_{i})  &=Y_k(t_{i-1}) + W_{spd}\sin (W_{dir})\Delta T \\
  S_k(t_{i})  &=S_k(t_{i-1}) + W_{spd}\Delta T \\
  r_k(t_{i}) &= p_y S_k(t_{i})^{q_y} \\
  Q_k(t_{i}) &= Q_k(t_{i-1}) \label{process_end}
\end{align}
where $W_{spd}$ and $W_{dir}$ denote wind speed and wind direction respectively. $\Delta T$ is the time interval from $t_{i-1}$ to $t_{i}$ and $S_k(t_{i})$ is the distance the $k$th puff has transported in $\Delta T$. $p_y$ and $q_y$ are Karlsruhe-J\"{u}lich diffusion coefficients ~\citep{Reddy2007, TerejanuP_IF_2007} which specifies meteorological conditions.

For the Gaussian shaped concentration, the mean is the puff center $\left( X, Y\right)$ and the standard deviation is the radius $r$. At each spatial point $(x,y)$, the concentration is computed by summing up contributions of all the puffs
\begin{align}
&c_{(x,y)}(t_i)= \sum_{k=1}^K \frac{Q_k(t_{i})}{2\pi(r_k(t_{i}))^2}\exp{\left(-\frac{(X_k(t_{i})-x)^2+(Y_k(t_{i})-y)^2}{2(r_k(t_{i}))^2}\right)}
\end{align}
where $K$ is the number of puffs released by the time point $t_{i}$. Fig.~\ref{dispersion_onePuff} shows the advection and diffusion of one Gaussian puff. Note that the color bar of each puff is different. In fact, due to the diffusing process, the average concentration is decreasing.
 \begin{figure}[htbp]
    \centering
   \includegraphics[width=1\linewidth]{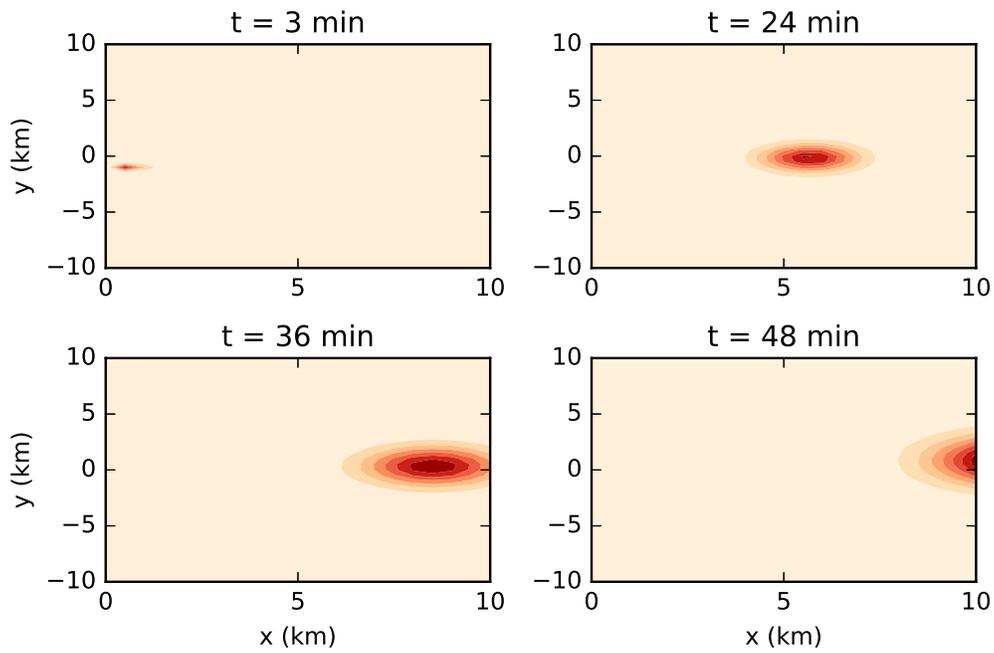}
   \caption{\textbf{The dynamics of one puff.} This figure shows the concentration surface of one puff at different time. The release occurs at t = 0 and the release location is (0, -1163.5 km). The wind direction is 0.17 m/s and the wind speed is 4 m/s. $p_y$ and $q_y$ are 0.466 and 0.866 respectively. Note the color bar is different for each plot.}
   \label{dispersion_onePuff}
\end{figure}

The following measurement model is used to relate model predictions with measurement data for each of the $N$ sensors. 
\begin{eqnarray}
ln(d_j) &=& ln(c_j) + \epsilon_{meas_{j}}  \quad j = 1,2,...,N~.  \nonumber \label{dispersion-obs} \\
\epsilon_{meas_{j}} &\sim& \mathcal{N}(-0.005, 0.1^2) \nonumber
\end{eqnarray}

\subsection{Simulation Settings}
\label{subsec:Simulation}

The affected area is a domain of $10\times20~km^2$. The pipeline is from $(0,-3 km)$ to $(0,3 km)$. The release accident can occur at any location along the pipeline. For simplicity, we assume that there is only one release source. For comparison purpose, we also grid the domain and select sensor locations on the grid points. The domain is shown in Fig.~\ref{domain}.
\begin{figure}[htbp]
  \centering
  \includegraphics[width=1\linewidth]{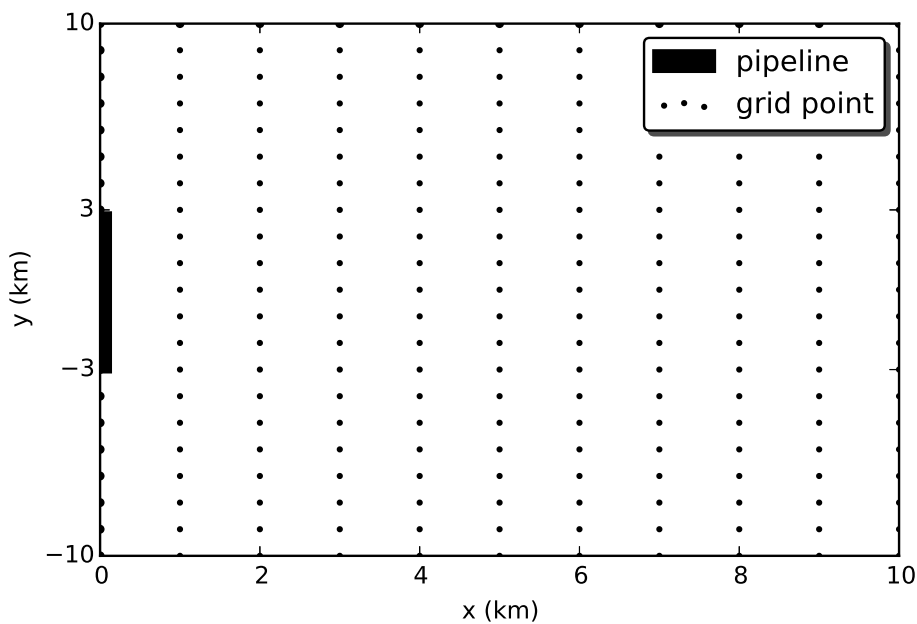}
  \caption{\textbf{The grid and release location.} The domain is grided by $11\times×21$. The pipeline is on the y axis, ranging from -3 km to 3 km.}
  \label{domain}
\end{figure}

The total simulation time is $30~min$ and the sampling interval is $1~min$.  Data collection starts from $t_1$, since in the beginning there are no chemicals in the sensing area. Each time after the data is collected, Bayesian inference is performed. For the first $10~min$, a series of Gaussian puffs are released every $1~min$ at the source from $t_0$. The wind is from west with $10^\circ$ standard deviation and the wind speed is $4~m/sec$. In this simulation, there are two unknown model parameters, release location and wind direction which both need to be inferred given measurement data. Although release location is the main concern, here we will include both parameters in $q$. 

\subsection{Selection of Sensor Locations}
\label{subsec:Find_opt}

Suppose $N$ sensors need to be placed to collect data, we use $D$ to denote the whole vector of observation data, then $D = \lbrace d_{1}, d_{2},..., d_{N} \rbrace$.  As discussed before, optimal sensor locations can be selected by computing $I(D;q|x)$. However, the optimal sensor locations are very likely to change as the plume advects and diffuses over the district. Assume sensor data is collected at a series of discrete time $t_{1}, t_{2},..., t_{M}$, then for each sensor $i, i=1,2,...,N $, $d_{i} = [d_i^{t_1}, d_i^{t_2},..., d_i^{t_M}]$. Therefore, $x$ which will provide the overall most information about $q$ should be selected in the following way:
\begin{align}\label{MI over time}
x^* = \arg\max_{x \in x}
    I(d_{1},d_{2},...,d_{N};q|x)~.
\end{align}
There are several issues in computing Eq.~\eqref{MI over time}. First, as we know, each $d_{i}$ is an $M$ dimensional vector, which means $I(d_{1},d_{2},...,d_{N};q|x)$ needs to be computed in over $NM$ dimensions.  Second, since the domain is continuous and we don't have an analytical form of mutual information, it is difficult to select best sensor locations, not mentioning multiple sensors need to be placed at the same time.  As discussed in the last section, these problems can be all addressed with our proposed method.

In this simulation, three sensors are placed. Sensor readings are collected during the simulation period to infer $q$. Since sensors are placed before any data is collected, which means the optimal sensor placement should provide most information about the release location in average sense for all possible initial conditions (release location and wind direction). The initial $1000$ ensemble members are generated at random for the release location and the wind direction from a uniform distribution over the length of the pipeline and $\mathcal{N}(0, 10^\circ)$ respectively. 

Three sensor locations are selected according to Algorithm~\ref{BOalgorithm}.  The three locations selected  are (4.8 km, -2.8 km), (2.8 km, 2.6 km) and (3.4 km, 4.1 km), which are shown in Fig.~\ref{sensor_placement}, represented by green diamonds. We also computed mutual information over the grid, the heat map in Fig.~\ref{sensor_placement} shows the value of mutual information at each grid point. Three locations where mutual information has largest value are selected, represented by blue diamonds. We can see that sensor locations selected by both methods are spatially close. 

 \begin{figure}[htbp]
    \centering
   \includegraphics[width=1\linewidth]{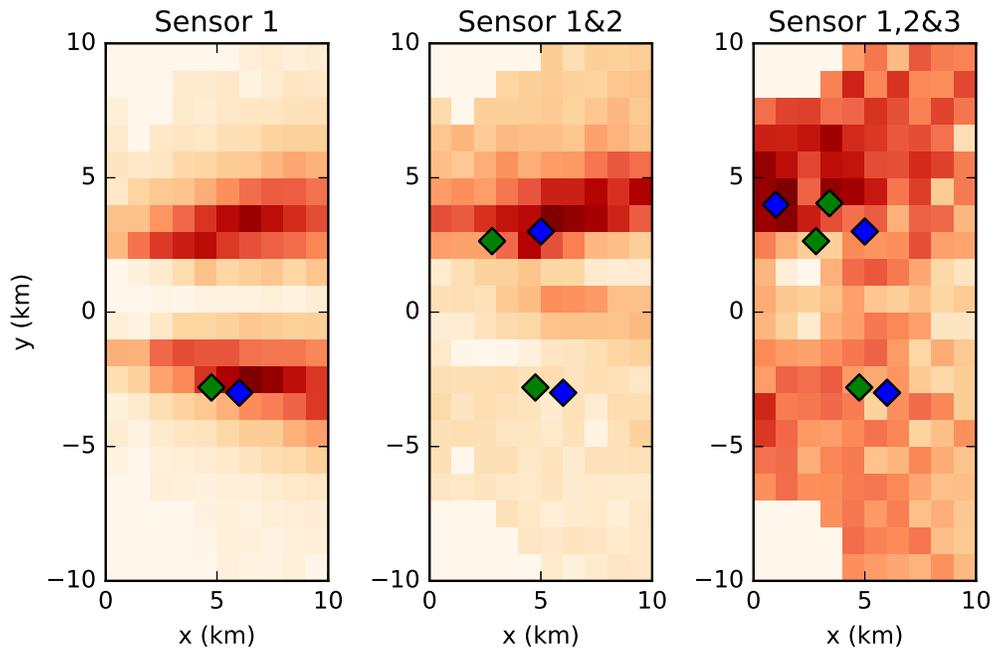}
   \caption{\textbf{Mutual information surface and sensor locations.} Three sensor locations are selected in a greedy way. The heat map show mutual information $I(d_{1},t;q)$, $I(d_{1},d_{2},t;q)$ and $I(d_{1},d_{2},d_{3},t;q)$ at each grid point. Blue diamonds represent sensor locations selected over grid points, while green ones are selected by Bayesian optimization.  Note the color bar for each plot is different.}
   \label{sensor_placement}
\end{figure}

\subsection{Result Analysis}
\label{subsec:Result_com}

\subsubsection{Estimation of Release Location}

After sensor data is collected and Bayesian inference is performed at each time point to estimate the release location. In this simulation, EnKF is used for Bayesian inference. How to use EnKF in our problem is detailed in Appendix. Here we simulate a release accident and see how the posterior distribution of the release location changes. The real release location is (0, -1291.7 m), where the wind direction is -0.026 rad, both of which are randomly selected from their distributions. Observation data is collected and the distribution of the release location is updated at each time point from $t_{1}$ to $t_{31}$. Fig.~\ref{figs_EnKF_BestR} shows the initial prior distribution and  posterior distributions after update at $t_{11}$, $t_{21}$ and $t_{31}$. It is obvious that the real release location is captured by the posterior distribution and the uncertainty is decreasing as more data are collected. One may note that the change in uncertainty from $t_{0}$ to $t_{11}$ is much bigger than that from $t_{21}$ to $t_{31}$, which means most uncertainty is reduced in the first several updates. Due to the measurement noise and computational approximations, the uncertainty of release location can never be eliminated but might converge to a small level.
\begin{figure}[htbp]
    \centering
    \begin{subfigure}[t]{0.8\textwidth}
        \centering
        \includegraphics[width=1\textwidth]{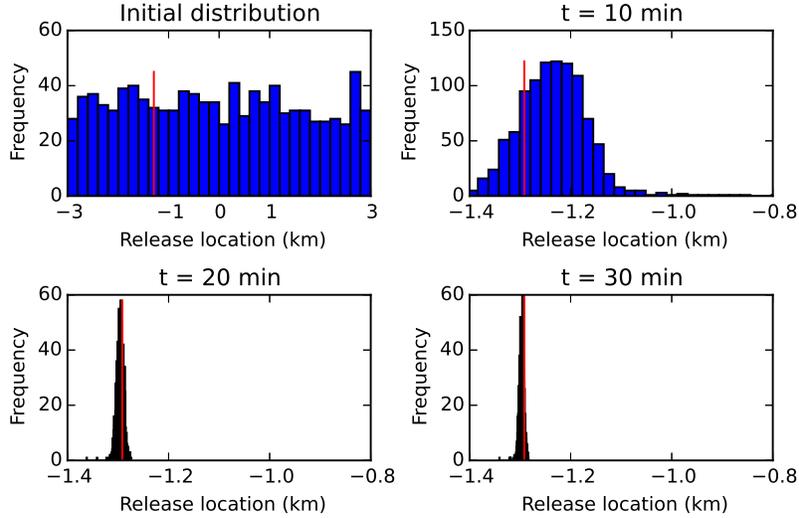}
        \caption{Prior $\&$ posterior distribution of release location}
    \end{subfigure}%
    \hspace{0.1in}  
    \begin{subfigure}[t]{0.8\textwidth}
        \centering
        \includegraphics[width=1\textwidth]{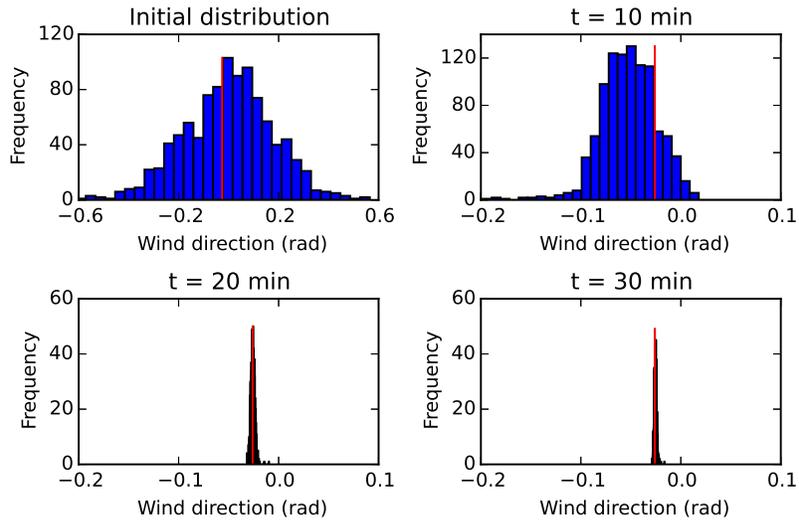}
        \caption{Prior $\&$ posterior distribution of wind direction}
     \end{subfigure}%
    \caption{\textbf{Distributions of parameters} The histograms in Figure (a) are prior distribution and posterior distribution of release location at each time step. Figure (b) shows corresponding distribution of wind direction. The real release location is (0,-1291.7 km) and the real wind direction is -0.026 rad. Both of them are represented by red vertical lines.}
     \label{figs_EnKF_BestR}
\end{figure}

\subsubsection{Comparison with Other Placements}

Here, we compare the selected sensor locations with 20 random placements under 50 different initial conditions. Each initial condition is a combination of release location and wind direction both of which are randomly sampled from their distributions. The performance is measured by conditional entropy. For each placement, the entropy of $q$ is given by $H(q|\xi=\xi_{i}, x=x_{j})$. Here, $x$ represent the sensor location and the subscript $j$ denotes a specific placement.  $\xi$ is initial condition and $\xi_{i}$ implies a particular initial condition. To compare average performance under different initial conditions, conditional entropy $H(q| \xi, x = x_{j})$ is computed by Eq.~\eqref{eq:conditionalEN}.
\begin{align}
H(q| \xi, x = x_{j}) = \sum_{\xi_{i}}p(\xi=\xi_{i})H(q|\xi=\xi_{i}, x=x_{j}) 
\label{eq:conditionalEN}
\end{align}

Fig.~\ref{figs_EntValue_Com} shows the average performance of each placement over 50 different initial conditions. We can see that MI via grids and MI via Bayesian optimization have similar performance, and they both outperforms most of random placements. They not only lead to smaller uncertainty in posterior distribution but also show a faster reduction in uncertainty.
\begin{figure}[htbp]
    \centering
    \begin{subfigure}[t]{0.8\textwidth}
        \centering
        \includegraphics[width=1\textwidth]{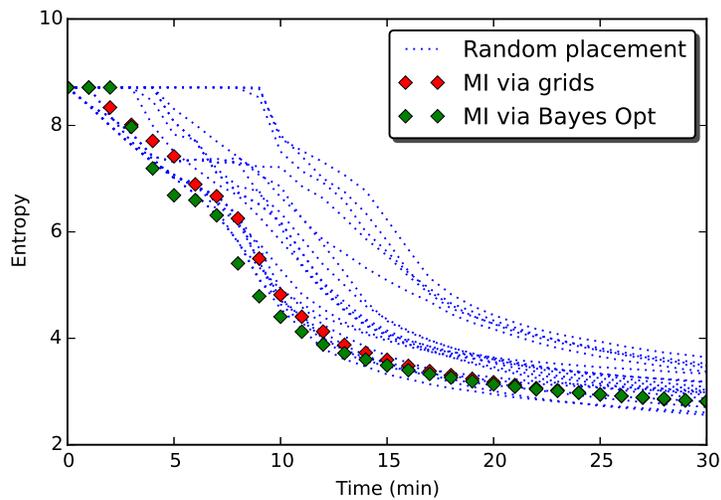}
        \caption{Entropy of release location}
    \end{subfigure}%
    \hspace{0.1in}  
    \begin{subfigure}[t]{0.8\textwidth}
        \centering
        \includegraphics[width=1\textwidth]{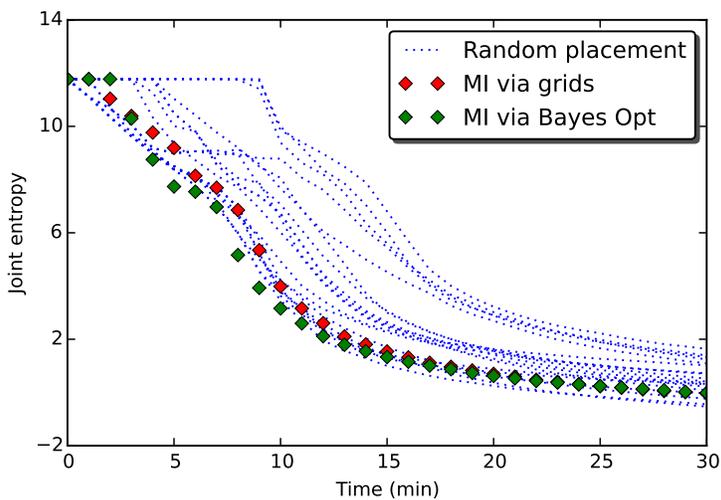}
        \caption{Joint entropy}
     \end{subfigure}%
    \caption{\textbf{Results comparison}. These two figures compare the performance of proposed strategy with 20 random sensor placements. Here, entropy is used to measure the uncertainty. Figure (a) shows entropy of release location after each update and Figure(b) shows joint entropy of release location and wind direction}
    \label{figs_EntValue_Com}
\end{figure}

\section{Conclusions}
\label{Conclusion}

In this paper, we address the sensor placement problem where certain QoI need to be inferred from observation data. The QoI often include model parameters and/or state variables but can also be other quantities. Inferring QoI usually involves solving inverse problem which is formulated in Bayesian framework in this paper. On the other hand, since data collected at different locations are likely to provide different amount of information, sensors should be strategically placed.

Mutual information is one of the most common used criterion to guide the sensor placement. It naturally quantifies the dependence between two variables and has been widely discussed in the literature. However, computing mutual information is challenging and the estimation in high dimension is always unreliable. Thus we propose a novel approach, which compute the lower bound of mutual information in only two dimension. The approach is based on data inequality processing and canonical correlation analysis. It projects observation and QoI into two dimension where the projections have largest correlation. This lower bound of mutual information is shown to be effective as a metric to select sensor locations. In addition, we apply Bayesian optimization to deal with continuous domain. We place Gaussian prior on the metric, which generates a mutual information surface. Then evaluations are made according to acquisition function. In this way, the number of evaluations is greatly reduced. 

A chemical dispersion accident is simulated and it shows that the proposed approach outperforms random sensor placements by offering an obviously faster reduction in uncertain of QoI. Also the sensor locations selected by Bayesian optimization are close to those by discretizing the domain into a fine grid, but with a considerably less number of evaluations.  The proposed approach is promising to address a vast range of sensor placement problems.

\section*{Acknowledgments}
This material is based upon work supported by the National Science Foundation under Grant No. 1504728, Shannxi Province National Science Fund 2013KW22-02 and
Xi'an Science and Technology Project Fund CX1252(7). We would like to thank Dr. Brian J. Williams of Los Alamos National Laboratory for his valuable suggestions and comments.

\appendix
\section{EnKF}

In this simulation, the inverse problem of inferring unknown parameters is solved by EnKF.  We will first give a brief introduction of EnKF, then show how to apply it to solve our problem.

EnKF uses ensemble members to describe probability distribution. Each ensemble member is updated through similar procedures as in Kalman filter. Let $u$ denote the state of the system. Suppose the observation model is
\begin{equation}
     d= Hu + \epsilon~.
\end{equation}
EnKF updates each ensemble member of $u$ as follows:
\begin{align}
 &u_{j}(t_i) = u_{j}(t_{i-1}) + \Sigma_{e}(t_{i-1})H^{T} \times \nonumber \label{update}\\
& \quad \quad \quad \quad [H\Sigma_{e}(t_{i-1})H^{T} + R_{e}]^{-1}[d_{j}(t_{i}) - Hu_{j}(t_{i-1})] \\
 &\Sigma_{e}(t_{i-1}) = \overline{[u(t_{i-1}) - \overline{u(t_{i-1})}][u(t_{i-1}) - \overline{u(t_{i-1})}]^{T}}\\
& \Sigma_{e}(t_{i}) = \overline{[u(t_{i}) - \overline{u(t_{i})}][u(t_{i}) - \overline{u(t_{i})}]^{T}}
\end{align}
where the perturbed measurements $d_{j}(t_{i}) = d(t_{i}) + \epsilon_{j}$,  $j=1\ldots Q$. Here $Q$ is the size of the ensemble and  $\epsilon_{j}$ is a sample of  $\epsilon$. $R_{e} = \overline{\epsilon\epsilon^{T}}$ is the covariance matrix of the noise samples.

In our problem, the states of the system are the concentrations at sensor locations. There are two parameters, release location and wind direction, which also need to be updated along with the concentrations $u = [u_{1},u_{2},...,u_{N}]$, an augmented state is required. Let $\theta$ denote parameters. Since observation noise is multiplicative, the augmented state is
\begin{equation}
     u^{*} = [\theta \quad ln{u}]^{T} \nonumber
\end{equation}
Another issue is that the mean of the measurement noise is nonzero as it follows lognormal distribution. Here we replace the original noise with a bias term $\mu$ plus a zero-mean noise $\epsilon^{*} \sim \mathcal{N}(0, \sigma^{2})$. Then the observation model becomes
\begin{equation}
     d^{*}= Hu^{*} + \mu+ \epsilon^{*}
\end{equation}
where $d^{*} = ln(d^{T})$ and $H = [0_{2, N} \quad  I_{N}] $. $I_{N}$ is  an identity matrix, $0_{2, N}$ is a zero matrix.  Last, the residue $d_{j}(t_{i}) - Hu_{j}(t_{i-1})$  in Eq.~\eqref{update} needs to be replaced by $d_{j}(t_{i}) - Hu_{j}(t_{i-1}) - \mu$. Given the joint posterior distribution of state and parameter,  the posterior distribution of the $\theta$ can be obtained by marginalization. 

\clearpage

\section{References}
\label{References}

\bibliographystyle{elsarticle-harv}
\bibliography{main}





\end{document}